\newtheorem{theorem}{Theorem}
\newtheorem{corollary}[theorem]{Corollary}
\newtheorem{definition}[theorem]{Definition}
\newtheorem{lemma}[theorem]{Lemma}
\newtheorem{proposition}[theorem]{Proposition}
\begin{document}
\title{Local Operations and Completely Positive Maps \\ in Algebraic Quantum Field Theory}
\author{Yuichiro Kitajima}
\maketitle

\begin{abstract}
Einstein introduced the locality principle which states that all physical effect in some finite space-time region does not influence its space-like separated finite region. Recently, in algebraic quantum field theory, R\'{e}dei captured the idea of the locality principle by the notion of operational separability. 
The operation in operational separability is performed in some finite space-time region, and leaves unchanged the state in its space-like separated finite space-time region. This operation is defined with a completely positive map. In the present paper, we justify using a completely positive map as a local operation in algebraic quantum field theory, and show that this local operation can be approximately written with Kraus operators under the funnel property.
\end{abstract}

\section{Introduction}

Einstein \cite{einstein1948quanten} introduced the separability principle and the locality principle to show incompleteness of quantum mechanics. The separability principle says that `any two spatially separated systems possess their own separate real states' \cite[p.173]{howard1985einstein}. Einstein writes:

\begin{quotation}
[I]t is characteristic of these physical things that they are conceived of as being arranged in a space-time continuum. Further, it appears to be essential for this arrangement of the things introduced in physics that, at a specific time, these things claim an existence independent of one another, insofar as the these things `lie in different parts of space'. (\cite[p.321]{einstein1948quanten}; Howard's translation \cite[p.187]{howard1985einstein})
\end{quotation}


Einstein introduced the locality principle in addition to the separability principle. Einstein writes:

\begin{quotation}
For the relative independence of spatially distant things ($A$ and $B$), this idea is characteristic: an external influence on $A$ has no immediate effect on $B$; this is known as the `principle of local action', which is applied consistently only in field theory. The complete suspension of this basic principle would make impossible the idea of the existence of (quasi-) closed systems and, thereby, the establishment of empirically testable laws in the sense familiar to us. (\cite[p.322]{einstein1948quanten}; Howard's translation \cite[p.188]{howard1985einstein})
\end{quotation}

This principle states that any physical effect in some finite space-time region does not influence its space-like separated finite region. Einstein \cite{einstein1948quanten} argued for the incompleteness of quantum mechanics under the locality principle and the separability principle.

According to Howard \cite{howard1985einstein}, the Bell inequality is a consequence of the separability and locality principle. Since the Bell inequality does not hold in algebraic quantum field theory and in quantum mechanics \cite{halvorson2000generic, kitajima2013epr, landau1987violation, summers1987bell, summers1987abell, summers1987maximal, summers1988maximal}, we must give up either separability or locality. Howard \cite{howard1985einstein} argued that the separability principle must be abandoned, and that the locality principle holds in quantum theory.
In the present paper we concentrate on the locality principle because it can be compatible with the violation of Bell inequalities. 

Recently, in algebraic quantum field theory, R\'{e}dei \cite{redei2010einstein, redei2011einstein} captured the idea of the locality principle by the notion of operational separability (Definition \ref{separability}), which had been introduced by R\'{e}dei and Valente \cite{redei2010local}. The reason why he adopts the formalism of algebraic quantum field theory is that Einstein \cite{einstein1948quanten} says that physical things are conceived of as being arranged in a space-time continuum, and that observables in algebraic quantum field theory are `explicitly regarded as localized in regions of the space-time continuum' \cite[p.1045]{redei2010einstein}.

The operation in operational separability is performed in some finite space-time region, and leaves unchanged the state in its space-like separated finite region. It is defined with a completely positive map. Valente \cite{valente2013local} called such an operation a relatively local operation (Definition \ref{two-local-operations}). On the other hand, there is another local operation. It is called an absolutely local operation, which is written with some operators in a local algebra which is associated with some open bounded region (Definition \ref{two-local-operations}). This operation in some finite space-time region has no effects on the entire causal complement of this region. A difference between these two types of operations is that a relatively local operation is not necessarily written in terms of local operators while an absolutely local operation is given by local operators by definition. Valente \cite{valente2013local} argued that the concept of absolutely local operation is too strong to express Einstein's locality principle because this principle simply demands that an operation performed in a system $A$ leaves unchanged the state of another space-like separated system $B$. 

There are two tasks here. One is to justify using a completely positive map as a local operation in algebraic quantum field theory. Another is to clarify the relation between these local operations. In the present paper, we show that a local operation in algebraic quantum field theory should be a completely positive map, and that a relatively local operation can be approximately written with some operators as well as an absolutely local operation.

The structure of the paper is as follows. We begin in Section \ref{AQFT} by reviewing the formalism of algebraic quantum field theory and notions of independence. In Section \ref{completely-positive-maps} we examine a definition of an operation. Usually  a completely positive map is regarded as an operation. Although this assumption is natural in the case of nonrelativistic quantum mechanics, it is not transparent in the case of algebraic quantum field theory. We will justify using a completely positive map as a local operation in the case of algebraic quantum field theory (Theorem \ref{completely}). We conclude, in Section \ref{relatively-local-operations}, by examining a similarity between an absolutely local operation and a relatively local operation. An absolutely local operation is written with some operators. This representation is called the Kraus representation. On the other hand, a relatively local operation does not necessarily admit such a representation. By establishing a slightly generalized Kraus representation theorem (Theorem \ref{Kraus1}),  it is shown that a relatively local operation can be approximately written with Kraus operators under the funnel property (Corollary \ref{Kraus2}). 

\section{Algebraic quantum field theory}
\label{AQFT}

Algebraic quantum field theory exists in two versions: the Haag-Araki theory which uses von Neumann algebras on a Hilbert space, and the Haag-Kastler theory which uses abstract C*-algebras. Here we adopt the Haag-Araki theory. In this theory, each bounded open region $\mathcal{O}$ in the Minkowski space is associated with a von Neumann algebra $\mathfrak{N}(\mathcal{O})$ on a Hilbert space $\mathcal{H}$. Such a von Neumann algebra is called a local algebra. 

In the present paper we use the following notation. For a subspace $\mathcal{K}$  of a Hilbert space $\mathcal{H}$, $\{ \mathcal{K} \}^-$ stands for the closure of $\mathcal{K}$. $\mathbb{B}(\mathcal{H})$ is the set of all bounded operators on a Hilbert space $\mathcal{H}$. $I$ stands for an identity operator on a Hilbert space. For a von Neumann algebra $\mathfrak{N}$ on a Hilbert space $\mathcal{H}$, $\mathfrak{N}'$ stands for the commutant of $\mathfrak{N}$ in $\mathbb{B}(\mathcal{H})$. For von Neumann algebras $\mathfrak{N}_1$ and $\mathfrak{N}_2$ on a Hilbert space $\mathcal{H}$, $\mathfrak{N}_1 \vee \mathfrak{N}_2$ stands for the von Neumann algebra generated by $\mathfrak{N}_1$ and $\mathfrak{N}_2$. 

For an open bounded region $\mathcal{O}$ in the Minkowski space, $\mathcal{O}'$ stands for the causal complement of $\mathcal{O}$ and $\bar{\mathcal{O}}$ the closure of $\mathcal{O}$. A double cone in Minkowski space is the intersection of the causal future of a point $x$ with the causal past of a point $y$ to the future of $x$. Two double cones $\mathcal{O}_1$, $\mathcal{O}_2$ are said to be strictly space-like separated if there is a neighborhood $\mathcal{N}$ of zero such that $\mathcal{O}_1 + x$ is space-like separated from $\mathcal{O}_2$ for all $x \in \mathcal{N}$.

In the present paper, we assume the following axioms.
\begin{definition}[Microcausality]
\label{microcausality}
\cite[p.10]{baumgartel1995operatoralgebraic}
Let $\mathcal{O}_1$ and $\mathcal{O}_2$ be bounded open regions in the Minkowski space. If $\mathcal{O}_1 \subseteq \mathcal{O}_2'$, then $\mathfrak{N}(\mathcal{O}_1) \subseteq \mathfrak{N}(\mathcal{O}_2)'$. This property is called microcausality. 
\end{definition}



\begin{definition}[The funnel property]
\label{funnel}
\cite[Definition 6.14]{summers1990independence}
For any pair $(\mathcal{O}, \tilde{\mathcal{O}})$ of double cones in the Minkowski space such that the closure of $\bar{\mathcal{O}} \subset \tilde{\mathcal{O}}$, there exists a type I factor $\mathfrak{N}$ such that $\mathfrak{N}(\mathcal{O}) \subset \mathfrak{N} \subset \mathfrak{N}(\tilde{\mathcal{O}})$. This property is called the funnel property.
\end{definition}

The following property is derived from usual axioms of algebraic quantum field theory \cite[Corollary 1.5.6]{baumgartel1995operatoralgebraic}.

\begin{definition}
\label{properly-infinite}
Let $\mathcal{O}$ be a bounded open region in the Minkowski space. $\mathfrak{N}(\mathcal{O})$ is properly infinite.
\end{definition}

Although there are some different notions of independence \cite{hamhalter2013quantum, summers1990independence}, we use only two notions.

\begin{definition}
Let $\mathfrak{N}_1$ and $\mathfrak{N}_2$ be von Neumann algebras on a Hilbert space $\mathcal{H}$.
\begin{itemize}
\item $\mathfrak{N}_{1}$ and $\mathfrak{N}_{2}$ are called Schlieder independent if $A_{1}A_{2} \neq 0$ whenever $0 \neq A_{1} \in \mathfrak{N}_{1}$ and $0 \neq A_{2} \in \mathfrak{N}_{2}$. 
\item $\mathfrak{N}_{1}$ and $\mathfrak{N}_{2}$ are called split if there exists a type I factor $\mathfrak{N}$ such that $\mathfrak{N}_{1} \subset \mathfrak{N} \subset \mathfrak{N}_{2}'$.
\end{itemize}
\end{definition}

If two double cones $\mathcal{O}_{1}$ and $\mathcal{O}_{2}$ are strictly space-like separated, then $\mathfrak{N}(\mathcal{O}_{1})$ and $\mathfrak{N}(\mathcal{O}_{2})$ are split by Axioms \ref{microcausality} and \ref{funnel}. The following lemma shows that the split property is stronger than the Schlieder property. 

\begin{lemma}
\label{murray}
\cite[Theorem 5.5.4]{kadison1983fundamentals}
Let $\mathfrak{N}$ be a factor on a Hilbert space $\mathcal{H}$. Then $AA' \neq 0$ for any nonzero operators $A \in \mathfrak{N}$ and $A' \in \mathfrak{N}'$.
\end{lemma}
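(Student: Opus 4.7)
The plan is to argue by contradiction, first reducing from arbitrary nonzero operators to nonzero projections via range projections, and then using triviality of the center of a factor to rule out orthogonality between nonzero projections in $\mathfrak{N}$ and $\mathfrak{N}'$.

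Suppose $AA' = 0$ for nonzero $A \in \mathfrak{N}$ and $A' \in \mathfrak{N}'$. I would let $E$ be the range projection of $A^*$ and $F$ the range projection of $A'$; since $\mathfrak{N}$ and $\mathfrak{N}'$ are von Neumann algebras, $E \in \mathfrak{N}$ and $F \in \mathfrak{N}'$, and both are nonzero. The hypothesis $AA'=0$ gives $\operatorname{range}(A') \subseteq \ker A = (\operatorname{range}(A^*))^\perp$, hence $F \leq I - E$, equivalently $EF = 0$. The claim thus reduces to showing that nonzero projections $E \in \mathfrak{N}$ and $F \in \mathfrak{N}'$ cannot be orthogonal.

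For this reduced problem, I would form the closed subspace $\mathcal{K} := \{ \mathfrak{N} F \mathcal{H} \}^{-}$. By construction $\mathcal{K}$ is $\mathfrak{N}$-invariant, and it is also $\mathfrak{N}'$-invariant because for $B' \in \mathfrak{N}'$, $B \in \mathfrak{N}$, $\xi \in \mathcal{H}$ one has $B'BF\xi = BFB'\xi \in \mathfrak{N} F \mathcal{H}$, using that $B'$ commutes with both $B$ and $F$. Since $\mathfrak{N}$ and $\mathfrak{N}'$ are self-adjoint, $\mathcal{K}^\perp$ is invariant under both algebras as well, so the orthogonal projection $R$ onto $\mathcal{K}$ lies in $\mathfrak{N} \cap \mathfrak{N}' = \mathcal{Z}(\mathfrak{N})$. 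Because $\mathfrak{N}$ is a factor, $R \in \{0, I\}$; since $F \neq 0$ makes $\mathcal{K}$ nonzero, one must have $R = I$, so $\mathcal{K} = \mathcal{H}$. But $EF = 0$ together with $E \in \mathfrak{N}$ yields $E \cdot B F \xi = B \cdot EF \xi = 0$ for every $B \in \mathfrak{N}$ and $\xi \in \mathcal{H}$, so $E$ annihilates $\mathfrak{N} F \mathcal{H}$ and therefore its closure $\mathcal{K} = \mathcal{H}$, contradicting $E \neq 0$.

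The main obstacle I anticipate is the bookkeeping that tracks which operator sits in $\mathfrak{N}$ versus $\mathfrak{N}'$, together with the careful verification that $\mathcal{K}$ is stable under both algebras simultaneously; once $\mathcal{K}$ is in place, triviality of the center does the remaining work essentially for free.
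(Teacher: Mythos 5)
Your reduction to orthogonal projections $E \in \mathfrak{N}$, $F \in \mathfrak{N}'$ via range projections is correct, but the central-carrier half of the argument contains a genuine error: you apply the wrong algebra to $F\mathcal{H}$. The subspace $\mathcal{K} = \{ \mathfrak{N} F \mathcal{H} \}^-$ is \emph{not} $\mathfrak{N}'$-invariant in general. Your verification $B'BF\xi = BFB'\xi$ uses that $B'$ commutes with $F$, but $B'$ and $F$ both lie in $\mathfrak{N}'$, which is noncommutative, so that step fails. Concretely, take $\mathfrak{N} = M_2(\mathbb{C}) \otimes I$ acting on $\mathbb{C}^2 \otimes \mathbb{C}^2$ (a factor) and $F = I \otimes P$ with $P$ a rank-one projection: then $\{ \mathfrak{N} F \mathcal{H} \}^- = \mathbb{C}^2 \otimes P\mathbb{C}^2 \neq \mathcal{H}$, and the projection onto it is $F$ itself, which is not central. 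The same confusion reappears in your last display: $E\cdot BF\xi = B\cdot EF\xi$ requires $E$ to commute with $B$, but both are elements of $\mathfrak{N}$.

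The fix is small and keeps your overall strategy intact: set $\mathcal{K} := \{ \mathfrak{N}' F \mathcal{H} \}^-$, i.e.\ apply the algebra that \emph{contains} $F$. Then $\mathcal{K}$ is $\mathfrak{N}'$-invariant trivially, and it is $\mathfrak{N}$-invariant because $T(B'F\xi) = B'F(T\xi)$ for $T \in \mathfrak{N}$, since $T$ commutes with both $B'$ and $F$; hence the projection onto $\mathcal{K}$ lies in $\mathfrak{N} \cap \mathfrak{N}' = \mathbb{C}I$ and equals $I$ because $F \neq 0$. The final step also becomes valid: $E(B'F\xi) = B'(EF)\xi = 0$, so $E$ annihilates the dense subspace $\mathfrak{N}'F\mathcal{H}$ and therefore $E = 0$, the desired contradiction. (Symmetrically, one could use $\{ \mathfrak{N} E \mathcal{H} \}^-$ and conclude $F = 0$.) Note that the paper itself offers no proof of this lemma — it simply cites Kadison--Ringrose, Theorem 5.5.4, whose argument is precisely this corrected central-carrier computation.
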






Lemma \ref{murray} shows that von Neumann algebras $\mathfrak{N}_{1}$ and $\mathfrak{N}_{2}$ are Schlieder independent if they are split. The following proposition is a characterization of Schlieder independence.

\begin{proposition} 
\label{Schlieder-proposition}
\cite[Theorem 1 and Proposition 2]{florig1997statistical} \cite[Theorem 11.2.5 and Theorem 11.2.17]{hamhalter2013quantum}
Let $\mathfrak{A}_1$ and $\mathfrak{A}_2$ be mutually commuting C*-subalgebras of a C*-algebra $\mathfrak{A}$. The following conditions are equivalent.
\begin{enumerate}
\item $\mathfrak{A}_1$ and $\mathfrak{A}_2$ are Schlieder independent.
\item $\| A_1 A_2 \| = \| A_1 \| \| A_2 \|$ for any $A_1 \in \mathfrak{A}_1$ and $A_2 \in \mathfrak{A}_2$.
\end{enumerate}
\end{proposition}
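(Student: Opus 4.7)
The implication $(2) \Rightarrow (1)$ is immediate: if $A_1, A_2$ are nonzero then $\|A_1\|, \|A_2\| > 0$, so by hypothesis $\|A_1 A_2\| > 0$ and hence $A_1 A_2 \neq 0$. The content of the proposition is the converse, which I would attack in three stages.

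First I would reduce to the case of commuting positive elements. For arbitrary $A_i \in \mathfrak{A}_i$, the $C^*$-identity together with the commutativity of $\mathfrak{A}_1$ and $\mathfrak{A}_2$ gives
\[
\|A_1 A_2\|^2 = \|(A_1 A_2)^*(A_1 A_2)\| = \|A_1^* A_1 \cdot A_2^* A_2\|,
\]
while $\|A_i\|^2 = \|A_i^* A_i\|$. Writing $B_i := A_i^* A_i \in \mathfrak{A}_i$, the task becomes to prove $\|B_1 B_2\| = \|B_1\| \|B_2\|$ whenever $B_1, B_2$ are positive in $\mathfrak{A}_1, \mathfrak{A}_2$ respectively. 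Setting $\lambda_i := \|B_i\|$, submultiplicativity of the norm yields the trivial inequality $\|B_1 B_2\| \leq \lambda_1 \lambda_2$.

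Next I would localize near the top of the spectrum by continuous functional calculus. Fix $\epsilon > 0$ and choose a continuous $h_\epsilon : [0, \lambda_i] \to [0, 1]$ that vanishes on $[0, \lambda_i - \epsilon]$ and satisfies $h_\epsilon(\lambda_i) = 1$; set $E_i := h_\epsilon(B_i) \in \mathfrak{A}_i$. Because $\lambda_i \in \sigma(B_i)$, we have $\|E_i\| = 1$ and $E_i \neq 0$, and the pointwise inequality $t \cdot h_\epsilon(t) \geq (\lambda_i - \epsilon) h_\epsilon(t)$ on $[0, \lambda_i]$ transfers via functional calculus to the operator inequality $B_i E_i \geq (\lambda_i - \epsilon) E_i$ in $\mathfrak{A}_i$. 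Multiplying the $i = 1$ inequality on the right by the positive element $B_2 E_2 \in \mathfrak{A}_2$ (which commutes with both sides), and then multiplying the $i = 2$ inequality by the positive element $E_1 \in \mathfrak{A}_1$ (likewise commuting with both sides), I would chain them to obtain
\[
B_1 B_2 E_1 E_2 \;\geq\; (\lambda_1 - \epsilon)\, B_2 E_1 E_2 \;\geq\; (\lambda_1 - \epsilon)(\lambda_2 - \epsilon)\, E_1 E_2.
\]

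At this point hypothesis (1) enters decisively: Schlieder independence applied to the nonzero $E_1$ and $E_2$ forces $E_1 E_2 \neq 0$, so $\|E_1 E_2\| > 0$. Taking norms in the displayed inequality and using $\|B_1 B_2 E_1 E_2\| \leq \|B_1 B_2\| \cdot \|E_1 E_2\|$ yields $\|B_1 B_2\| \geq (\lambda_1 - \epsilon)(\lambda_2 - \epsilon)$, and sending $\epsilon \to 0$ closes the proof. The main technical point to execute carefully is the chain of operator inequalities above: it depends on the fact that multiplying a self-adjoint inequality $X \geq Y$ by a positive element $Z$ that commutes with both sides preserves the order (since $(X - Y) Z \geq 0$ as a product of commuting positives), and each application here is legitimate precisely because $\mathfrak{A}_1$ and $\mathfrak{A}_2$ commute elementwise by hypothesis.
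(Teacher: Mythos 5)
Your proof is correct. Note that the paper does not prove this proposition at all --- it simply cites Florig--Summers and Hamhalter --- so there is no in-paper argument to compare against; what you have written is essentially the standard argument from those sources (reduce to positive elements via the $C^*$-identity and elementwise commutativity, localize near the top of the spectrum, and invoke Schlieder independence on the resulting nonzero ``approximate spectral'' elements). A nice feature of your version is that you localize with \emph{continuous} functional calculus, so $E_i = h_\epsilon(B_i)$ stays inside the $C^*$-subalgebra $\mathfrak{A}_i$ and you never need to pass to spectral projections in the enveloping von Neumann algebra or bidual, where one would otherwise have to worry about whether Schlieder independence survives weak closure. Two trivial points you should make explicit: the case where some $A_i=0$ (handle it separately, since then $\lambda_i=0$ and the construction of $h_\epsilon$ degenerates), and the requirement $0<\epsilon<\min(\lambda_1,\lambda_2)$, which guarantees $h_\epsilon(0)=0$ and hence that $h_\epsilon(B_i)$ lies in the (possibly non-unital) subalgebra generated by $B_i$. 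The chain of operator inequalities is sound for exactly the reason you state: a product of two commuting positive elements $P,Q$ is positive because $PQ=P^{1/2}QP^{1/2}$.
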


\section{Completely positive maps}
\label{completely-positive-maps}
In this section, we examine the reason why local operations are assumed to be completely positive in algebraic quantum field theory.

\begin{definition}
Let $\mathfrak{N}$ be a von Neumann algebra and let $T$ be a linear map of $\mathfrak{N}$.
\begin{itemize}
\item $T$ is called positive if $A \geq 0$ entails $T(A) \geq 0$.
\item Let $[A_{jk}]$ be $n \times n$-matrix with entries $A_{jk}$ in $\mathfrak{N}$. $T$ is called completely positive if $[A_{jk}] \geq 0$ entails $[T(A_{jk})] \geq 0$ for any $n \in \mathbb{N}$.
\end{itemize}
\end{definition}

It is natural to assume that an operation is a positive map because the probability after the process represented by the map $T$ must be positive. Moreover, if we introduce an environmental system which is represented by a set $M_{n}(\mathbb{C})$ of all $n \times n$ matrices with complex entries, then $(T \otimes \text{Id})(A)$ must be also positive for any positive operator $A$ on $\mathbb{B}(\mathcal{H}) \otimes M_{n}(\mathbb{C})$, where $\text{Id}$ denotes the identity map on $M_{n}(\mathbb{C})$. This is equivalent to the condition that $T$ is completely positive. Therefore it is reasonable to assume that an operation is completely positive in the case of nonrelativistic quantum mechanics.

A completely positive map plays an important role in quantum measurements \cite{okamura2016measurement, ozawa1984quantum}. It is also used as a local operation in algebraic quantum field theory \cite{ojima2016local, redei2010einstein, redei2010operational, redei2010quantum, redei2010local, valente2013local}. For example, a new concept of local states is defined in terms of  a completely positive map \cite{ojima2016local}. But it is not transparent to use a completely positive map as an operation in algebraic quantum field theory because any local algebra which is associated with two space-like separated regions is not isomorphic to $\mathbb{B}(\mathcal{H}) \otimes M_{n}(\mathbb{C})$. Therefore, we examine how we can justify it in algebraic quantum field theory in Theorem \ref{completely}. 

We introduce a positive map $T$ of $\mathfrak{N}_{1}$ such that it has an extension to $\mathfrak{N}_{1} \vee \mathfrak{N}_{2}$ which is the identity map on $\mathfrak{N}_2$ to capture an idea that this operation is performed in the system $\mathfrak{N}_{1}$ and it does not influence the system $\mathfrak{N}_{2}$. To examine such an operation, we use the following lemma.

\begin{lemma}
\label{Werner-lemma}
\cite[Lemma]{werner1987local} \cite[Lemma 3.12]{summers1990independence}
Let $\mathfrak{N}_{1}$ and $\mathfrak{N}_{2}$ be mutually commuting von Neumann algebras on a Hilbert space $\mathcal{H}$, and let $T'$ be a positive map of $\mathfrak{N}_{1} \vee \mathfrak{N}_{2}$ such that $T'(A_{2})=A_{2}$ for all $A_{2} \in \mathfrak{N}_{2}$. Then $T'(A_{1}A_{2})=T'(A_{1})A_{2}$ for any $A_{1} \in \mathfrak{N}_{1}$ and $A_{2} \in \mathfrak{N}_{2}$.
\end{lemma}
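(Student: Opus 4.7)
The plan is to reduce to the case where $A_2$ is a projection in $\mathfrak{N}_2$, exploit the fact that any such projection $E$ with $T'(E) = E$ forces $T'$ to preserve the corner decomposition induced by $E$, and then lift to general $A_2$ via the spectral theorem.

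First I would fix a projection $E \in \mathfrak{N}_2$ and a positive $A_1 \in \mathfrak{N}_1$ with $A_1 \leq \lambda I$. Because $\mathfrak{N}_1$ and $\mathfrak{N}_2$ commute, $E A_1 = E A_1 E$, whence $0 \leq E A_1 \leq \lambda E$. Applying $T'$ together with the hypothesis $T'(E) = E$ gives $0 \leq T'(E A_1) \leq \lambda E$. A standard order-theoretic squeeze then forces $(I-E)\, T'(E A_1)\,(I-E) = 0$, so $T'(E A_1) = E\, T'(E A_1)\, E$. Applying the same reasoning to the projection $I - E$, which is also fixed by $T'$, produces $T'((I-E) A_1) = (I-E)\, T'((I-E) A_1)\, (I-E)$. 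Adding these two decompositions and multiplying by $E$ on either side yields $T'(E A_1) = E\, T'(A_1) = T'(A_1)\, E$. Splitting an arbitrary $A_1$ first into positive parts and then into real and imaginary parts extends this identity to every $A_1 \in \mathfrak{N}_1$.

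Next I would lift from projections to a general $A_2 \in \mathfrak{N}_2$. For self-adjoint $A_2$, the spectral theorem inside the von Neumann algebra $\mathfrak{N}_2$ presents $A_2$ as a norm limit of simple elements $\sum_i \mu_i E_i$ with pairwise orthogonal projections $E_i \in \mathfrak{N}_2$. For such simple elements the identity $T'\bigl(\sum_i \mu_i E_i A_1\bigr) = \bigl(\sum_i \mu_i E_i\bigr) T'(A_1)$ follows from the projection case by linearity; I would then pass to the norm limit using the fact that every positive linear map between C*-algebras is automatically bounded. Decomposing a general $A_2$ into its self-adjoint and anti-self-adjoint parts closes the argument.

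The main obstacle is the opening move, namely recognising that $T'(E) = E$ already forces $T'(E A_1)$ to live in the corner cut out by $E$. Once one notices that the order sandwich $0 \leq E A_1 \leq \lambda E$ transfers through the positive map $T'$ verbatim to $0 \leq T'(E A_1) \leq \lambda E$, the corner confinement is immediate and the rest reduces to spectral calculus and linearity. A subtlety worth flagging is that $T'$ is assumed only positive, not $2$-positive, so Kadison--Schwarz inequalities are unavailable; the argument must therefore proceed through an order-theoretic squeeze on projections rather than through a multiplicative-domain argument.
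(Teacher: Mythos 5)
Your argument is correct and is essentially the standard proof of this lemma as given in the cited sources (Werner's lemma; Summers, Lemma 3.12): the paper itself supplies no proof, only the citation. The order sandwich $0 \leq EA_1 \leq \lambda E$ for a projection $E \in \mathfrak{N}_{2}$ and positive $A_1 \in \mathfrak{N}_{1}$, the corner confinement $T'(EA_1) = E\,T'(EA_1)\,E$ obtained from positivity, the complementary decomposition via $I-E$, and the passage to general $A_2$ by spectral approximation and boundedness of positive maps are all exactly the steps of the known argument, so there is nothing to add.
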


By using this lemma, we can show the following fact.


\begin{theorem}
\label{completely}
Let $\mathfrak{N}_1$ and $\mathfrak{N}_2$ be mutually commuting von Neumann algebras which are Schlieder independent, let $\mathfrak{N}_2$ have either type $II_1$ direct summand or properly infinite one, and let $T$ be a positive map of $\mathfrak{N}_{1}$. If there is a positive map $T'$ of $\mathfrak{N}_{1} \vee \mathfrak{N}_{2}$ such that 
\[ T'(A_1)=T(A_1), \ \ \ \ \ \ T'(A_2)=A_2 \]
for any $A_1 \in \mathfrak{N}_1$ and $A_2 \in \mathfrak{N}_2$, then $T$ is completely positive.
\end{theorem}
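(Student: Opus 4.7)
My plan is to exploit the assumption on $\mathfrak{N}_{2}$ to embed $M_{n}(\mathbb{C})$ unitally into $\mathfrak{N}_{2}$ for every $n\in\mathbb{N}$, thereby recasting $n$-positivity of $T$ as ordinary positivity inside $\mathfrak{N}_{1}\vee\mathfrak{N}_{2}$, where the positive extension $T'$ is already available. Concretely, since $\mathfrak{N}_{2}$ has either a type $II_{1}$ direct summand or a properly infinite one, I would extract for every $n$ a system of matrix units $\{e_{ij}\}_{i,j=1}^{n}\subset\mathfrak{N}_{2}$ satisfying $e_{ij}e_{kl}=\delta_{jk}e_{il}$, $e_{ij}^{*}=e_{ji}$, and $\sum_{i=1}^{n} e_{ii}=I$; this is possible because a properly infinite von Neumann algebra admits $n$ mutually orthogonal equivalent projections summing to its identity, and a type $II_{1}$ one does the same through its center-valued trace.

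With these matrix units in hand, I would introduce the map
\[
\Phi\colon M_{n}(\mathfrak{N}_{1})\longrightarrow\mathfrak{N}_{1}\vee\mathfrak{N}_{2},\qquad [A_{ij}]\longmapsto\sum_{i,j=1}^{n} A_{ij}e_{ij},
\]
and check that it is a $*$-homomorphism using the commutation of $\mathfrak{N}_{1}$ and $\mathfrak{N}_{2}$ together with the matrix unit relations. For injectivity, I would observe that $\Phi([A_{ij}])=0$ implies, upon multiplying on the left by $e_{kk}$ and on the right by $e_{ll}$, that $A_{kl}e_{kl}=0$; since $\mathfrak{N}_{1}$ and the C*-algebra generated by the $e_{ij}$ inherit mutual commutation and Schlieder independence from $\mathfrak{N}_{1}$ and $\mathfrak{N}_{2}$, Proposition \ref{Schlieder-proposition} gives $\|A_{kl}e_{kl}\|=\|A_{kl}\|\|e_{kl}\|=\|A_{kl}\|$, forcing $A_{kl}=0$. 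Hence $\Phi$ would be a $*$-monomorphism of C*-algebras and therefore preserve positivity in both directions.

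The conclusion should then fall out in one line. Given $[A_{ij}]\geq 0$ in $M_{n}(\mathfrak{N}_{1})$, one has $\Phi([A_{ij}])\geq 0$; applying $T'$ and invoking Lemma \ref{Werner-lemma} together with $T'(e_{ij})=e_{ij}$ and $T'(A_{ij})=T(A_{ij})$ gives
\[
T'\bigl(\Phi([A_{ij}])\bigr)=\sum_{i,j} T(A_{ij})\,e_{ij}=\Phi\bigl([T(A_{ij})]\bigr),
\]
which is positive because $T'$ is, so $[T(A_{ij})]\geq 0$ by the bidirectional positivity-preservation of $\Phi$. As $n$ was arbitrary, $T$ is completely positive.

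The main obstacle I anticipate is verifying that $\Phi$ is a $*$-monomorphism, and it is precisely here that all three structural hypotheses come into play at once: the direct-summand assumption is what delivers matrix units with $\sum_{i} e_{ii}=I$; the mutual commutation of $\mathfrak{N}_{1}$ and $\mathfrak{N}_{2}$ is what makes $\Phi$ multiplicative; and Schlieder independence is what supplies the norm identity that rules out a non-trivial kernel. Once $\Phi$ is known to be an injective $*$-homomorphism, Werner's lemma carries the calculation, and positivity of $T'$ yields complete positivity of $T$ almost by inspection.
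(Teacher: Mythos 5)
Your proposal is essentially the paper's own proof: the same matrix-unit embedding $M_n(\mathfrak{N}_1)\hookrightarrow\mathfrak{N}_1\vee\mathfrak{N}_2$ built from $n$ mutually orthogonal equivalent projections in $\mathfrak{N}_2$, injectivity of that embedding via Schlieder independence (Proposition \ref{Schlieder-proposition}), Lemma \ref{Werner-lemma} to pull $T'$ past the matrix units, and positivity transported back through the resulting $*$-isomorphism. One small correction: the hypothesis only gives a type $II_1$ or properly infinite \emph{direct summand}, so the equivalent projections live inside that summand and you cannot in general arrange $\sum_i e_{ii}=I$ (take $\mathfrak{N}_2=M_2(\mathbb{C})\oplus\mathfrak{M}$ with $\mathfrak{M}$ of type $II_1$ and $n=3$); fortunately your argument never uses that normalization, and the paper likewise works with non-unital matrix units.
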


\begin{proof}
Since $\mathfrak{N}_2$ has have either type $II_1$ direct summand or properly infinite one, for any natural number $n \in \mathbb{N}$, there is a set $\{ E_1, \dots, E_n \}$ of mutually orthogonal and equivalent projections in $\mathfrak{N}_2$ \cite[Proposition V.1.35 and Proposition V.1.36]{takesaki2002theory}. Thus there is a set $\{ V_1, \dots, V_n \}$ of partial isometries in $\mathfrak{N}_2$ such that $V_i^*V_i=E_1$ and $V_iV_i^*=E_i$ for any $i \in \{ 1,\dots, n \}$. 


Let $E_{jk} := V_jV_k^*$, let $M_n(\mathfrak{N}_{1})$ be the set of all $n \times n$-matrices $[A_{jk}]$ with entries $A_{jk}$ in $\mathfrak{N}_{1}$, and let
\[ \mathfrak{C} := \Bigg\{ \sum_{j,k=1}^n C_{jk}E_{jk} \Big| C_{jk} \in \mathfrak{N}_1, 1 \leq j,k \leq n \Bigg\}. \]
$\mathfrak{C}$ is a linear subspace of $\mathfrak{N}_1 \vee \mathfrak{N}_2$, and is self-adjoint because $(C_{jk}E_{jk})^*=E_{jk}^*C_{jk}^*=C_{jk}^*E_{kj} \in \mathfrak{C}$ for any $C_{jk} \in \mathfrak{N}_1$. Furthermore, if $C_{jk}, C_{lm} \in \mathfrak{N}_1$, then $(C_{jk}E_{jk})(C_{lm}E_{lm})=\delta_{kl}C_{jk}C_{lm}E_{jm} \in \mathfrak{C}$, where $\delta_{kl}$ equals $1$ if $k=l$, and $0$ if $k \neq l$. By linearity $\mathfrak{C}$ is closed under multiplication. Hence $\mathfrak{C}$ is a *-subalgebra of $\mathfrak{N}_1 \vee \mathfrak{N}_2$.



Let $M_n(\mathfrak{N}_{1})$ be the set of all $n \times n$-matrices $[A_{ij}]$ with entries $A_{ij}$ in $\mathfrak{N}_1$, and let $\alpha$ be a map of $M_n(\mathfrak{N}_{1})$ to $\mathfrak{C}$ such that 
\begin{equation}
\alpha \left( [A_{jk}]
\right)
:=\sum_{j,k=1}^{n} A_{jk} E_{jk} 
\end{equation}
for any $[A_{jk}] \in M_n(\mathfrak{N}_1)$. Clearly $\alpha$ is surjective. Given $S^{(s)}$ and $S^{(t)}$ in $\mathfrak{C}$, say
\[ S^{(s)}=\sum_{j,k=1}^{n} A_{jk}^{(s)}E_{jk}, \ \ \ \ \ \ S^{(t)}=\sum_{j,k=1}^{n} A_{jk}^{(t)}E_{jk}, \]
we have
\begin{equation}
\label{cp1}
\alpha([A_{jk}^{(s)}]^{*})=\alpha([A_{jk}^{(s)}])^{*}, 
\end{equation}
\begin{equation}
\label{cp2}
\alpha([A_{jk}^{(s)}][A_{lm}^{(t)}])=\alpha([A_{jk}^{(s)}])\alpha([A_{lm}^{(t)}]), 
\end{equation}
\begin{equation}
\label{cp3}
\begin{split}
\| A_{jk}^{(s)} - A_{jk}^{(t)} \|&=\| A_{jk}^{(s)} - A_{jk}^{(t)} \| \| E_{jk} \| \\
&=\| (A_{jk}^{(s)} - A_{jk}^{(t)}) E_{jk} \| \\
&=\| E_{jj}(S^{(s)}-S^{(t)})E_{kk} \| \\
&\leq \| S^{(s)}-S^{(t)} \|
\end{split} 
\end{equation}
because $\| E_{jk} \|^2=\| E_{jk}^*E_{jk} \| = \| V_kV_j^*V_jV_k^* \| = \| E_k \| =1$ and $\| A_{jk}^{(s)} - A_{jk}^{(t)} \| \| E_{jk} \|=\| (A_{jk}^{(s)} - A_{jk}^{(t)}) E_{jk} \| $ by Proposition \ref{Schlieder-proposition}. Thus $\alpha$ is a faithful *-homomorphism of $M_n(\mathfrak{N}_{1})$ to $\mathfrak{C}$, which entails that $\mathfrak{C}$ is a C*-algebra \cite[p.192]{takesaki2002theory}.


Let $T$ be a positive map of $\mathfrak{N}_1$, let $T'$ be a positive map of $\mathfrak{N}_{1} \vee \mathfrak{N}_{2}$ such that $T'(A_1)=T(A_1)$ and $T'(A_2)=A_2$ for any $A_1 \in \mathfrak{N}_1$ and $A_2 \in \mathfrak{N}_2$, and let $[A_{jk}]$ be a positive operator in $M_n(\mathfrak{N}_1)$. Then there is $[B_{jk}] \in M_n(\mathfrak{N}_1)$ such that $[A_{jk}]=[B_{jk}]^*[B_{jk}]$, so that $\sum_{j,k=1}^n A_{jk}E_{jk}=\alpha \left( [A_{jk}] \right)=\alpha \left( [B_{jk}]^*[B_{jk}] \right) = \alpha \left( [B_{jk}] \right)^*\alpha \left( [B_{jk}] \right) \geq 0$ by Equations (\ref{cp1}) and (\ref{cp2}).
Since $T'$ is positive on $\mathfrak{N}_1 \vee \mathfrak{N}_2$, $T'(\sum_{j,k=1}^n A_{jk}E_{jk}) \geq 0$. By Lemma \ref{Werner-lemma}, 
\begin{equation}
\sum_{j,k=1}^n T(A_{jk})E_{jk}=\sum_{j,k=1}^n T'(A_{jk})E_{jk}=\sum_{j,k=1}^nT'(A_{jk}E_{jk})=T'\left(\sum_{j,k=1}^n A_{jk}E_{jk} \right) \geq 0. 
\end{equation}

Since $\mathfrak{C}$ is a C*-algebra, there is an operator $D \in \mathfrak{C}$ such that $\sum_{j,k=1}^n T(A_{jk})E_{jk} = D^*D$ \cite[Theorem 4.2.6]{kadison1983fundamentals}. Therefore
\begin{equation}
[T(A_{jk})]=\alpha^{-1}  \left( \sum_{j,k=1}^n T(A_{jk})E_{jk} \right) =\alpha^{-1}(D^*D)=\alpha^{-1}(D)^*\alpha^{-1}(D) \geq 0. 
\end{equation}
Because $n$ is an arbitrary natural number, $T$ is completely positive on $\mathfrak{N}_1$.
\qed
\end{proof}

Let $\mathcal{O}_1$ and $\mathcal{O}_2$ be double cones such that $\mathcal{O}_1 \subset \mathcal{O}_2'$ and let $T$ be a positive map of $\mathfrak{N}(\mathcal{O}_{1})$. When $T$ has an extension to $\mathfrak{N}(\mathcal{O}_{1}) \vee \mathfrak{N}(\mathcal{O}_{2})$ which is the identity map on $\mathfrak{N}(\mathcal{O}_{2})$, $T$ can be regarded as an operation performed in $\mathcal{O}_{1}$ which does not influence a state in $\mathcal{O}_{2}$. Since $\mathfrak{N}(\mathcal{O}_{1})$ and $\mathfrak{N}(\mathcal{O}_{2})$ are split by Definitions \ref{microcausality} and \ref{funnel}, they are Schlieder independent by Lemma \ref{murray}. By Definition \ref{properly-infinite}, any local algebra is properly infinite. Thus, Theorem \ref{completely} entails that $T$ is completely positive. Therefore it is reasonable to assume that a local operation performed in some region which does not influence its space-like separated region is completely positive in algebraic quantum field theory.

\section{Relatively local operations}
\label{relatively-local-operations}

R\'{e}dei and Valente \cite{redei2010local} introduced the notion of operational W*-separability to capture the idea that a causally well behaved operation exists.

\begin{definition}[Operational W*-separability]
\label{separability}
\cite[Definition 6]{redei2010local}
Let $\mathfrak{N}_{1}$ and $\mathfrak{N}_{2}$ be von Neumann subalgebras of a von Neumann algebra $\mathfrak{N}$. $\mathfrak{N}_{1}$ and $\mathfrak{N}_{2}$ are called operationally W*-separable in $\mathfrak{N}$ if the following two conditions are true:
\begin{enumerate}
\item If $T$ is a normal completely positive map of $\mathfrak{N}$ such that $T(A_{1}) \in \mathfrak{N}_{1}$ for any $A_{1} \in \mathfrak{N}_{1}$, there exists a normal completely positive map $T'$ such that $T'(A_{1})=T(A_{1})$ and $T'(A_{2})=A_{2}$ for any $A_{1} \in \mathfrak{N}_{1}$ and $A_{2} \in \mathfrak{N}_{2}$.
\item If $T$ is a normal completely positive map of $\mathfrak{N}$ such that $T(A_{2}) \in \mathfrak{N}_{2}$ for any $A_{2} \in \mathfrak{N}_{2}$, there exists a normal completely positive map $T'$ such that $T'(A_{2})=T(A_{2})$ and $T'(A_{1})=A_{1}$ for any $A_{2} \in \mathfrak{N}_{2}$ and $A_{1} \in \mathfrak{N}_{1}$.
\end{enumerate}

\end{definition}

The normal completely positive map $T'$ in Definition \ref{separability} is performed in some finite space-time region, and leaves unchanged the state in its space-like separated finite region. Thus, this definition requires that there exists such a causally well behaved operation. The following proposition shows that operational W*-separability holds in algebraic quantum field theory.

\begin{proposition}
\cite[Proposition 2]{redei2010operational}; \cite[Section 5]{redei2010quantum}; \cite[Theorem 5.2]{summers2009subsystems}
Let assume microcausality (Definition \ref{microcausality}) and the funnel property (Definition \ref{funnel}), let $\mathcal{O}_1$ and $\mathcal{O}_2$ be strictly space-like separated double cones. Then $\mathfrak{N}(\mathcal{O}_1)$ and $\mathfrak{N}(\mathcal{O}_2)$ are operationally W*-separable in $\mathfrak{N}(\mathcal{O}_1) \vee\mathfrak{N}(\mathcal{O}_2)$.
\end{proposition}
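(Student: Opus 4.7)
The plan is to reduce the problem to a tensor product construction via the split property. First, strict space-like separation combined with the funnel property yields the split property for the pair $(\mathfrak{N}(\mathcal{O}_1), \mathfrak{N}(\mathcal{O}_2))$: strict separation lets one choose a double cone $\tilde{\mathcal{O}}_1$ with $\bar{\mathcal{O}}_1 \subset \tilde{\mathcal{O}}_1 \subset \mathcal{O}_2'$; the funnel property (Definition~\ref{funnel}) then yields a type I factor $\mathfrak{N}$ with $\mathfrak{N}(\mathcal{O}_1) \subset \mathfrak{N} \subset \mathfrak{N}(\tilde{\mathcal{O}}_1)$, and microcausality (Definition~\ref{microcausality}) gives $\mathfrak{N}(\tilde{\mathcal{O}}_1) \subset \mathfrak{N}(\mathcal{O}_2)'$, so $\mathfrak{N}(\mathcal{O}_1) \subset \mathfrak{N} \subset \mathfrak{N}(\mathcal{O}_2)'$, which is precisely splitness.

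A standard consequence of splitness is the existence of a normal $*$-isomorphism
\[
\Phi : \mathfrak{N}(\mathcal{O}_1) \vee \mathfrak{N}(\mathcal{O}_2) \longrightarrow \mathfrak{N}(\mathcal{O}_1) \bar{\otimes} \mathfrak{N}(\mathcal{O}_2)
\]
sending $A_1 A_2$ to $A_1 \otimes A_2$ for $A_i \in \mathfrak{N}(\mathcal{O}_i)$. To verify condition~(1) of Definition~\ref{separability}, let $T$ be a normal completely positive map of $\mathfrak{N}(\mathcal{O}_1) \vee \mathfrak{N}(\mathcal{O}_2)$ with $T(A_1) \in \mathfrak{N}(\mathcal{O}_1)$ for every $A_1 \in \mathfrak{N}(\mathcal{O}_1)$, and set $T_1 := T|_{\mathfrak{N}(\mathcal{O}_1)}$, which is a normal completely positive map of $\mathfrak{N}(\mathcal{O}_1)$. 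Define
\[
T' := \Phi^{-1} \circ \bigl( T_1 \bar{\otimes} \mathrm{id}_{\mathfrak{N}(\mathcal{O}_2)} \bigr) \circ \Phi.
\]
Since the spatial tensor product of two normal completely positive maps is again normal and completely positive, $T'$ is a normal completely positive map on $\mathfrak{N}(\mathcal{O}_1) \vee \mathfrak{N}(\mathcal{O}_2)$, and evaluating on generators gives $T'(A_1) = \Phi^{-1}(T_1(A_1) \otimes I) = T(A_1)$ and $T'(A_2) = \Phi^{-1}(I \otimes A_2) = A_2$ for $A_i \in \mathfrak{N}(\mathcal{O}_i)$. Condition~(2) follows by the symmetric construction with the roles of $\mathfrak{N}(\mathcal{O}_1)$ and $\mathfrak{N}(\mathcal{O}_2)$ interchanged.

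The main obstacle is invoking the tensor product decomposition $\Phi$ cleanly. The implication ``split property yields a normal $*$-isomorphism onto the spatial tensor product'' is a nontrivial but classical operator algebra result (in the spirit of D'Antoni--Longo), and it is exactly what is supplied by the references cited in the statement of the proposition. Once $\Phi$ is available, the construction of $T'$ and the verification of both conditions of operational W*-separability are essentially forced, and the normality and complete positivity of $T_1 \bar{\otimes} \mathrm{id}$ is a routine feature of the spatial tensor product of normal completely positive maps.
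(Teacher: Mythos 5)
The paper does not actually prove this proposition itself---it is imported from the cited references---but your argument is essentially the standard proof given there: strict separation plus the funnel property yields splitness, a type~I interpolating factor yields the normal isomorphism $\Phi$ of $\mathfrak{N}(\mathcal{O}_1) \vee \mathfrak{N}(\mathcal{O}_2)$ onto the spatial tensor product, and the required extension is $\Phi^{-1}\circ(T_1 \bar{\otimes} \mathrm{id})\circ\Phi$. Your construction is correct, including the routine supporting points (normality of the restriction $T_1$, and existence, normality, and complete positivity of the tensor product of normal completely positive maps).
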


In this section, we examine the normal completely positive map $T'$ in Definition \ref{separability}. Valente \cite{valente2013local} called it a relatively local operation. There is another local operation. It is called an absolutely local operation. Thus there are two types of local operations.

\begin{definition}
\label{two-local-operations}
\cite[Section 3]{valente2013local}
Let $\mathfrak{N}_{1}$ and $\mathfrak{N}_{2}$ be mutually commuting von Neumann algebras on a Hilbert space $\mathcal{H}$.
\begin{itemize}
\item A normal completely positive map $T$ of $\mathbb{B}(\mathcal{H})$ is called an absolutely local operation in $\mathfrak{N}_1$ if there are operators $K_i$ in $\mathfrak{N}_{1}$ such that 
\[ T(A)=\sum_{j \in J} K_{j}^{*}AK_{j}, \ \ \ \ T(I)=I \]
for any $A \in \mathbb{B}(\mathcal{H})$.
\item A normal completely positive map $T$ of $\mathfrak{N}_1 \vee \mathfrak{N}_2$ is called a relatively local operation in $\mathfrak{N}_{1}$ with respect to $\mathfrak{N}_{2}$ if $T(A_{1}) \in \mathfrak{N}_{1}$ and $T(A_{2})=A_{2}$ for any $A_{1} \in \mathfrak{N}_{1}$ and $A_{2} \in \mathfrak{N}_{2}$.
\end{itemize}
\end{definition}

An absolutely local operation $T$ in Definition \ref{two-local-operations} does not influence the system $\mathfrak{N}_1'$ which includes $\mathfrak{N}_2$ while a relatively local operation in Definition \ref{two-local-operations} does not influence only the system $\mathfrak{N}_2$. In the case of algebraic quantum field theory, an absolutely local operation in some region has no effect on the entire causal complement of this region. Although Clifton and Halvorson \cite{clifton2001entanglement} discussed local disentanglement in terms of absolutely local operations, Valente \cite{valente2013local} argued that an absolutely local operation is too strong because Einstein's locality principle simply demands that an operation performed in a system $A$ leaves unchanged the state of another space-like separated system $B$. 

There are two classical theorems characterizing a completely positive map. One is Stinespring representation theorem, and another Kraus representation theorem.

\begin{theorem}[Stinespring representation theorem]
\label{Stinespring}
\cite{stinespring1955positive}
Let $\mathfrak{A}$ be a unital C*-algebra, let $\mathcal{H}$ be a Hilbert space, and let $T$ be a completely positive map from $\mathfrak{A}$ to $\mathbb{B}(\mathcal{H})$. Then there exists a Hilbert space $\mathcal{K}$, a representation $\pi: \mathfrak{A} \rightarrow \mathbb{B}(\mathcal{K})$, and a bounded operator $W: \mathcal{H} \rightarrow \mathcal{K}$ such that
\[ T(A)=W^{*}\pi(A)W \]
for any $A \in \mathfrak{A}$.
\end{theorem}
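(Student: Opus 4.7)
The plan is to give a GNS-style construction. First I would form the algebraic tensor product $\mathfrak{A} \otimes \mathcal{H}$ (in the sense of finite sums of simple tensors) and equip it with the sesquilinear form
\[ \Big\langle \sum_i A_i \otimes h_i, \sum_j B_j \otimes k_j \Big\rangle := \sum_{i,j} \langle h_i, T(A_i^*B_j) k_j \rangle_{\mathcal{H}}. \]
Positivity of this form is exactly where complete positivity enters: for a single sum $\xi = \sum_i B_i \otimes h_i$, the matrix $[B_i^*B_j]$ in $M_n(\mathfrak{A})$ is positive (it factors as a product of a row with its adjoint), so $[T(B_i^*B_j)] \geq 0$ in $M_n(\mathbb{B}(\mathcal{H}))$, which forces $\langle \xi, \xi \rangle \geq 0$. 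Quotienting by the null space $\mathcal{N}$ and completing gives the Hilbert space $\mathcal{K}$.

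Next I would define the representation by $\pi(A)(B \otimes h + \mathcal{N}) := AB \otimes h + \mathcal{N}$, extended linearly. The main obstacle is verifying that $\pi(A)$ is well-defined and bounded with $\|\pi(A)\| \leq \|A\|$. For this I would use the inequality $\|A\|^2 I - A^*A \geq 0$ in $\mathfrak{A}$, take a square root $C \in \mathfrak{A}$ with $C^*C = \|A\|^2 I - A^*A$, and observe that
\[ [B_i^*(\|A\|^2 I - A^*A) B_j] = [B_i^* C^* C B_j] \geq 0 \]
in $M_n(\mathfrak{A})$ by the same factorisation trick. Applying complete positivity of $T$ and pairing with the vector $(h_1,\dots,h_n)$ gives $\|\pi(A)\xi\|^2 \leq \|A\|^2 \|\xi\|^2$, which simultaneously shows well-definedness on the quotient and provides a bounded extension to all of $\mathcal{K}$. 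The algebraic identities $\pi(AB) = \pi(A)\pi(B)$, $\pi(A^*) = \pi(A)^*$ and $\pi(I) = I_{\mathcal{K}}$ are then routine on the dense subspace.

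Finally, since $\mathfrak{A}$ is unital, I would define $W: \mathcal{H} \to \mathcal{K}$ by $Wh := I \otimes h + \mathcal{N}$. Boundedness is immediate from $\|Wh\|^2 = \langle h, T(I) h \rangle \leq \|T(I)\| \|h\|^2$. The defining relation is then a direct computation: for $h, k \in \mathcal{H}$ and $A \in \mathfrak{A}$,
\[ \langle h, W^*\pi(A) W k \rangle = \langle I \otimes h, A \otimes k \rangle = \langle h, T(A) k \rangle, \]
so $T(A) = W^*\pi(A) W$ as desired. The only genuinely delicate step is the boundedness of $\pi(A)$; the rest consists of checking that the form is positive semidefinite and that the natural multiplication and embedding respect the quotient, which are formal consequences of the complete positivity hypothesis and the existence of square roots of positive elements in the C*-algebra $\mathfrak{A}$.
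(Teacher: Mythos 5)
Your proposal is correct. The paper does not prove this theorem at all---it cites Stinespring's original article as a classical result---and your GNS-style dilation (positivity of the form via complete positivity applied to $[B_i^*B_j]$, boundedness of $\pi(A)$ via the factorisation $\|A\|^2 I - A^*A = C^*C$, and $Wh := I \otimes h + \mathcal{N}$) is precisely the standard argument found there.
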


Kraus representation theorem follows Stinespring representation theorem. 

\begin{theorem}[Kraus representation theorem]
\label{original-Kraus}
\cite{kraus1983states}
Let $\mathcal{H}$ be a Hilbert space and let $T$ be a normal completely positive map of $\mathbb{B}(\mathcal{H})$ such that $0 < T(I) \leq I$. Then there are bounded operators $K_{j}$ in $\mathbb{B}(\mathcal{H})$ such that
\[ T(A)=\sum_{j \in J}K_{j}^{*}AK_{j}, \ \ \ \ \ \ 0 < \sum_{j \in J}K_{j}^{*}K_{j} \leq I \]
for any $A \in \mathbb{B}(\mathcal{H})$. 
\end{theorem}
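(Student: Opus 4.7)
The plan is to deduce the Kraus decomposition from the Stinespring representation (Theorem~\ref{Stinespring}) together with the structure theorem for normal $*$-representations of $\mathbb{B}(\mathcal{H})$. First I apply Theorem~\ref{Stinespring} to $T$ to obtain a Hilbert space $\mathcal{K}$, a unital $*$-representation $\pi:\mathbb{B}(\mathcal{H}) \to \mathbb{B}(\mathcal{K})$, and a bounded operator $W:\mathcal{H} \to \mathcal{K}$ with $T(A)=W^{*}\pi(A)W$ for all $A\in\mathbb{B}(\mathcal{H})$.

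Next I upgrade $\pi$ to a normal representation, using the hypothesis that $T$ is normal. The cleanest route is to work with the minimal Stinespring dilation, for which normality of $T$ is known to imply normality of $\pi$; equivalently, one can decompose $\pi$ into its normal and singular parts and observe that the singular part contributes zero to $W^{*}\pi(\cdot)W$, so that cutting down by the central support of the normal part leaves a normal $\pi$ without changing $T$. Once $\pi$ is normal, I invoke the classification of normal representations of the type I factor $\mathbb{B}(\mathcal{H})$: up to unitary equivalence, $\pi$ is an amplification, so I may assume $\mathcal{K}=\mathcal{H}\otimes\mathcal{L}$ and $\pi(A)=A\otimes I_{\mathcal{L}}$ for some Hilbert space $\mathcal{L}$, having absorbed the unitary into $W$.

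Finally, I fix an orthonormal basis $\{e_{j}\}_{j\in J}$ of $\mathcal{L}$ and introduce the isometries $V_{j}:\mathcal{H}\to\mathcal{H}\otimes\mathcal{L}$, $V_{j}x:=x\otimes e_{j}$, setting $K_{j}:=V_{j}^{*}W \in \mathbb{B}(\mathcal{H})$. Using the intertwining relation $(A\otimes I_{\mathcal{L}})V_{j}=V_{j}A$ and the strong-operator resolution $\sum_{j\in J}V_{j}V_{j}^{*}=I_{\mathcal{H}\otimes\mathcal{L}}$, I compute
\[
T(A)=W^{*}(A\otimes I_{\mathcal{L}})W=\sum_{j\in J}W^{*}(A\otimes I_{\mathcal{L}})V_{j}V_{j}^{*}W=\sum_{j\in J}W^{*}V_{j}AV_{j}^{*}W=\sum_{j\in J}K_{j}^{*}AK_{j},
\]
with convergence in the strong operator topology. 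Specialising to $A=I$ gives $\sum_{j\in J}K_{j}^{*}K_{j}=T(I)$, which satisfies $0<T(I)\leq I$ by hypothesis.

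The main obstacle is the second step: passing from a general Stinespring dilation to a normal one. Everything else is bookkeeping given the amplification form of $\pi$ and the intertwining identity, but normality of the dilation is what actually uses the hypothesis that $T$ is normal, and without it the series for $T(A)$ need not be well defined on all of $\mathbb{B}(\mathcal{H})$. I would expect the author either to quote the minimal-dilation fact directly or to argue via the singular/normal decomposition of $\pi$.
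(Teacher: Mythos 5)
Your proof is correct. One thing to note at the outset: the paper does not actually prove Theorem \ref{original-Kraus} --- it is quoted from Kraus with a citation --- so the right comparison is with the paper's proof of its generalization, Theorem \ref{Kraus1}, of which the present statement is the special case $\mathfrak{N}=\mathbb{B}(\mathcal{H})$. Both arguments share the same skeleton: apply Stinespring (Theorem \ref{Stinespring}), use normality of $T$ to obtain a normal dilation $\pi$, decompose $\pi$ into copies of the identity representation via intertwining isometries $V_j$, and set $K_j := V_j^* W$. Where you differ is the middle step. You invoke the structure theorem for normal unital representations of $\mathbb{B}(\mathcal{H})$ (every such representation is, up to unitary equivalence, an amplification $A\mapsto A\otimes I_{\mathcal{L}}$), after which the bookkeeping with $V_j x = x\otimes e_j$ and $\sum_j V_jV_j^* = I$ is immediate. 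The paper instead builds this decomposition by hand: a minimal projection $P_0$ with $\pi(P_0)\neq 0$ yields a cyclic projection $E_0\in\pi(\mathfrak{N})'$ on which $\pi$ is unitarily equivalent to the identity representation, and a Zorn's-lemma-maximal orthogonal family $\{E_j\}$ is shown to sum to $I$. What the hands-on version buys is precisely the generality the paper needs later: it works for a type I factor $\mathfrak{N}$ acting on $\mathcal{H}$ with $T(\mathfrak{N})\not\subseteq\mathfrak{N}$, where one cannot directly quote the amplification theorem for $\mathbb{B}(\mathcal{H})$; what your version buys is brevity and transparency for the classical statement. Your flagged ``main obstacle'' --- upgrading the Stinespring dilation to a normal one --- is handled in the paper exactly as you propose, by the bare assertion that normality of $T$ implies normality of $\pi$ for the (minimal) dilation, so your hedged treatment of that point is no weaker than the paper's.
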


The operators $K_{i}$ in Theorem \ref{original-Kraus} are called Kraus operators. If a normal completely positive map is defined on a proper subalgebra of $\mathbb{B}(\mathcal{H})$, it does not necessarily admit a decomposition with Kraus operators. 

Here we examine a normal completely positive map $T$ from a type I factor $\mathfrak{N}$ on a Hilbert space $\mathcal{H}$ to $\mathbb{B}(\mathcal{H})$. Note that if $T(A) \in \mathfrak{N}$ for any $A \in \mathfrak{N}$, we can apply Kraus representation theorem because there is a Hilbert space $\mathcal{K}$ such that $\mathfrak{N}$ is isomorphic to $\mathbb{B}(\mathcal{K})$. However, $T(\mathfrak{N})$ is not necessarily included in $\mathfrak{N}$, so we cannot use the original Kraus representation theorem. Yet, we show below (Theorem \ref{Kraus1}) that a representation theorem similar to Kraus representation theorem holds if the von Neumann algebra $\mathfrak{N}$ is a type I factor.

\begin{theorem}
\label{Kraus1}
Let $\mathfrak{N}$ be a type I factor on a Hilbert space $\mathcal{H}$, and let $T$ be a normal completely positive map of $\mathfrak{N}$ to $\mathbb{B}(\mathcal{H})$ such that $0 < T(I) \leq I$. Then
there are bounded operators $K_j$ in $\mathbb{B}(\mathcal{H})$ such that 
\[ T(A)=\sum_{j \in J} K_j^*AK_j, \ \ \ \ \ \ 0 < \sum_{j \in J} K_{j}^{*}K_{j} \leq I \]
for any $A \in \mathfrak{N}$. 

\end{theorem}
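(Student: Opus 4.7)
The plan is to extend $T$ to a normal completely positive map of all of $\mathbb{B}(\mathcal{H})$ and then invoke the original Kraus representation theorem (Theorem \ref{original-Kraus}) directly. Since $\mathfrak{N}$ is a type I factor acting on $\mathcal{H}$, the classical structure theorem for type I factors supplies Hilbert spaces $\mathcal{K}$ and $\mathcal{L}$ together with a unitary identification $\mathcal{H} \cong \mathcal{K} \otimes \mathcal{L}$ under which $\mathfrak{N}$ corresponds to $\mathbb{B}(\mathcal{K}) \otimes I_{\mathcal{L}}$ and $\mathfrak{N}'$ to $I_{\mathcal{K}} \otimes \mathbb{B}(\mathcal{L})$. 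This tensor product decomposition is precisely what makes a normal conditional expectation onto $\mathfrak{N}$ available.

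Concretely, I would pick any unit vector $\xi \in \mathcal{L}$, form the normal state $\omega_\xi(B) = \langle \xi, B \xi \rangle$ on $\mathbb{B}(\mathcal{L})$, and use the associated slice map $E := \mathrm{id}_{\mathbb{B}(\mathcal{K})} \otimes \omega_\xi$, which is a normal unital completely positive projection from $\mathbb{B}(\mathcal{H})$ onto $\mathfrak{N}$; in particular $E(A) = A$ for every $A \in \mathfrak{N}$. Setting $\tilde{T} := T \circ E$, I obtain a normal completely positive map of $\mathbb{B}(\mathcal{H})$ into itself which extends $T$ and satisfies $\tilde{T}(I) = T(E(I)) = T(I)$, hence $0 < \tilde{T}(I) \leq I$. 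Theorem \ref{original-Kraus} then produces bounded operators $K_j \in \mathbb{B}(\mathcal{H})$ with
\[ \tilde{T}(X) = \sum_{j \in J} K_j^* X K_j, \qquad 0 < \sum_{j \in J} K_j^* K_j \leq I \]
for all $X \in \mathbb{B}(\mathcal{H})$; restricting to $A \in \mathfrak{N}$ yields the required representation of $T$.

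The delicate point is the appeal to the structure theorem for type I factors together with the resulting normal conditional expectation $E$; this is exactly where the hypothesis that $\mathfrak{N}$ is a type I factor (rather than an arbitrary von Neumann subalgebra) is indispensable, since a generic von Neumann subalgebra need not be the range of any normal conditional expectation, and in that generality no tensor product splitting of the ambient $\mathcal{H}$ is available. Once this extension is secured, the remainder is a routine composition of normal completely positive maps and a direct application of the classical Kraus theorem.
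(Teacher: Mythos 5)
Your proof is correct, but it takes a genuinely different route from the paper's. You reduce everything to the classical Kraus theorem by first splitting $\mathcal{H} \cong \mathcal{K} \otimes \mathcal{L}$ via the structure theorem for type I factors, building the normal conditional expectation $E = V^*(\cdot)V \otimes I_{\mathcal{L}}$ (with $V\eta = \eta \otimes \xi$) from $\mathbb{B}(\mathcal{H})$ onto $\mathfrak{N}$, and applying Theorem \ref{original-Kraus} to $\tilde{T} = T \circ E$; each of these steps is standard and the composition argument is sound. The paper instead stays inside the Stinespring picture: it takes the Stinespring dilation $T(A) = W^*\pi(A)W$, uses minimal projections of $\mathfrak{N}$ and a Zorn's lemma exhaustion to decompose the normal representation $\pi$ into cyclic pieces each spatially equivalent to a subrepresentation of the identity representation, obtaining isometries $V_j$ with $\pi(A)E_j = V_j^*AV_j$ and hence explicit Kraus operators $K_j = V_jW$. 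Your argument is shorter and makes transparent that the real content is the existence of a normal conditional expectation of $\mathbb{B}(\mathcal{H})$ onto $\mathfrak{N}$ (which, as you note, is exactly where type I enters, since a general von Neumann subalgebra admits no such expectation); the cost is reliance on the spatial structure theorem and the slice map as black boxes. The paper's argument is more self-contained and constructs the $K_j$ directly from the dilation, which is essentially a proof of the representation-theoretic facts your black boxes encode. Both establish the stated conclusion.
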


\begin{proof}
By Theorem \ref{Stinespring}, there is a representation $\pi$ of $\mathfrak{N}$ on a Hilbert space $\mathcal{K}$ and a bounded operator $W: \mathcal{H} \rightarrow \mathcal{K}$ such that $T(A)=W^*\pi(A)W$ for any $A \in \mathfrak{N}$. Since $T$ is normal, so is $\pi$. Since $\pi(I)>0$ and $\mathfrak{N}$ is a type I factor, there exists a minimal projection $P_0 \in \mathfrak{N}$ such that $\pi(P_0) \neq 0$. Let $x_0 \in \mathcal{H}$ be a unit vector such that $P_0x_0=x_0$, let $y_0 \in \mathcal{K}$ be a unit vector such that $\pi(P_0)y_0=y_0$, and let $E_0$ and $Q_0$ be projections whose ranges are $\{ \pi(\mathfrak{N})y_0 \}^{-}$ and $\{ \mathfrak{N}x_0 \}^{-}$, respectively. Then $E_0 \in \pi(\mathfrak{N})'$.
For any $A \in \mathfrak{N}$, $P_0AP_0=\langle x_0, A x_0 \rangle P_0$ since $P_0$ is a minimal projection. 
Thus
$\langle y_0, \pi(A)y_0 \rangle = \langle y_0, \pi(P_0AP_0)y_0 \rangle = \langle x_0, Ax_0 \rangle$
for any $A \in \mathfrak{N}$. 
Therefore there exists a unitary operator $U_0$ from $\{ \pi(\mathfrak{N})y_0 \}^{-}$ to $\{ \mathfrak{N}x_0 \}^-$ such that $\pi(A)E_0=U_0^*AU_0$ for any $A \in \mathfrak{N}$ by \cite[Proposition 4.5.3]{kadison1983fundamentals}. 
Let $V_0:= Q_0U_0E_0$. Then $V_0$ is an isometry from $\mathcal{K}$ to $\mathcal{H}$ such that $\pi(A)E_0=V_0^{*}AV_0$ for any $A \in \mathfrak{N}$.

By Zorn's lemma, it can be shown that there are a maximal family $\{ E_j \in \pi(\mathfrak{N})' | j \in J \}$ of mutually orthogonal projections in $\pi(\mathfrak{N})'$ and a family $\{ V_j | j \in J \}$ of isometries from $\mathcal{K}$ to $\mathcal{H}$ such that the range of $E_j$ is $\{ \pi(\mathfrak{N})y_j \}^-$ for some unit vector $y_j \in \mathcal{K}$, and $\pi(A)E_j=V_j^*AV_j$ for any $A \in \mathfrak{N}$. Suppose that $\sum_{j \in J} E_{j} < I$. Let $F^0 := I - \sum_{j \in J} E_{j}$. Then there is a unit vector $y' \in F^0 \mathcal{K}$. Since $\pi(I)y'=y' \neq 0$ and $\mathfrak{N}$ is a type I factor, there is a minimal projection $P^0 \in \mathfrak{N}$ such that $\pi(P^0)y' \neq 0$. Thus $\pi(P^0)F^0 \neq 0$. Let $x^0$ be a unit vector such that $P^0x^0=x^0$, let $y^0$ be a unit vector such that $\pi(P^0)F^0 y^0 = y^0$, and let $E^0$ be a projection whose range is $\{ \pi(\mathfrak{N})y^0 \}^-$. Then $E^0 \in \pi(\mathfrak{N})'$. Since $\pi(P^0)y^0=y^0$ and $F^0y^0=y^0$,
\begin{equation}
\langle \pi(A)y_{j}, \pi(B)y^{0} \rangle =\langle \pi(B^{*}A)y_{j}, y^{0} \rangle = \langle E_{j}\pi(B^{*}A)y_{j}, F^{0}y^{0} \rangle =0, 
\end{equation}
and
\begin{equation}
\langle y^{0}, \pi(A)y^{0} \rangle = \langle y^{0}, \pi(P^0AP^0)y^{0} \rangle = \langle x^0, Ax^0 \rangle 
\end{equation}
for any $j \in J$ and $A, B \in \mathfrak{N}$.
Therefore $E_jE^0=0$ for any $j \in J$, and there exists an isometry $V^{0}$ from $\mathcal{K}$ to $\mathcal{H}$ such that $\pi(A)E^{0}=V^{0*}AV^{0}$ for any $A \in \mathfrak{N}$. This contradicts the maximality of $\{ E_{j} | j \in J \}$. Therefore, $\sum_{j \in J} E_{j} = I$.




Let $K_j := V_jW$ for any $j \in J$. Then
\begin{equation}
T(A) = W^*\pi(A)W=\sum_{j \in J} W^*\pi(A)E_jW=\sum_{j \in J} W^*V_j^*AV_jW=\sum_{j \in J}K_j^*AK_j 
\end{equation}
for any $A \in \mathfrak{N}$. Since $0 < T(I) \leq I$ and $T(I)=\sum_{j \in J}K_j^*K_j$, $0 < \sum_{j \in J}K_j^*K_j \leq I$.
\qed
\end{proof}


Under the funnel property (Definition \ref{funnel}), type I factors exist which are interpolated between local algebras of regions strictly contained in each other. By using Theorem \ref{Kraus1}, we show that a relatively local operation can be approximately written with Kraus operators in algebraic quantum field theory.

\begin{corollary}
\label{Kraus2}
Let's assume microcausality (Definition \ref{microcausality}) and the funnel property (Definition \ref{funnel}), let $\tilde{\mathcal{O}}_1$ and $\tilde{\mathcal{O}}_2$ be double cones such that $\tilde{\mathcal{O}}_1 \subset \tilde{\mathcal{O}}_2'$, and let $T$ be a relatively local operation in $\mathfrak{N}(\tilde{\mathcal{O}}_{1})$ with respect to $\mathfrak{N}(\tilde{\mathcal{O}}_{2})$. For any double cones $\mathcal{O}_{1}$ and $\mathcal{O}_{2}$ such that $\bar{\mathcal{O}}_{1} \subset \tilde{\mathcal{O}}_1$ and $\bar{\mathcal{O}}_{2} \subset \tilde{\mathcal{O}}_2$, there are bounded operators $K_j$ in $\mathfrak{N}(\mathcal{O}_{2})'$ such that
\[ T(A)=\sum_{j \in J} K_j^*AK_j, \ \ \ \ \ \sum_{j \in J} K_{j}^{*}K_{j}=I \]
for any $A \in \mathfrak{N}(\mathcal{O}_1) \vee \mathfrak{N}(\mathcal{O}_{2})$.
\end{corollary}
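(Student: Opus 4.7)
The plan is to use the funnel property to interpolate type I factors between the given double cones, then exploit the tensor decomposition of $\mathcal{H}$ supplied by the factor around $\mathcal{O}_2$ so that a direct application of Theorem \ref{Kraus1} on a smaller Hilbert space yields Kraus operators already commuting with $\mathfrak{N}(\mathcal{O}_2)$. Concretely, I would first invoke Definition \ref{funnel} twice to obtain type I factors $\hat{\mathfrak{N}}_1$ and $\hat{\mathfrak{N}}_2$ with $\mathfrak{N}(\mathcal{O}_i) \subset \hat{\mathfrak{N}}_i \subset \mathfrak{N}(\tilde{\mathcal{O}}_i)$ for $i=1,2$. Microcausality (Definition \ref{microcausality}) combined with $\tilde{\mathcal{O}}_1 \subset \tilde{\mathcal{O}}_2'$ gives $\hat{\mathfrak{N}}_1 \subset \mathfrak{N}(\tilde{\mathcal{O}}_1) \subset \mathfrak{N}(\tilde{\mathcal{O}}_2)' \subset \hat{\mathfrak{N}}_2'$. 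Since $\hat{\mathfrak{N}}_2$ is a type I factor on $\mathcal{H}$, there is a unitary identification $\mathcal{H} \cong \mathcal{K}_2 \otimes \mathcal{L}$ under which $\hat{\mathfrak{N}}_2 = \mathbb{B}(\mathcal{K}_2) \otimes I$ and $\hat{\mathfrak{N}}_2' = I \otimes \mathbb{B}(\mathcal{L})$; under this identification $\hat{\mathfrak{N}}_1$ transports to a type I factor $\tilde{\mathfrak{N}}_1 \subset \mathbb{B}(\mathcal{L})$.

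By Definition \ref{two-local-operations}, $T(\hat{\mathfrak{N}}_1) \subset \mathfrak{N}(\tilde{\mathcal{O}}_1) \subset \hat{\mathfrak{N}}_2'$, so the restriction $T_1 := T|_{\hat{\mathfrak{N}}_1}$ descends, under the identification $\hat{\mathfrak{N}}_2' \cong \mathbb{B}(\mathcal{L})$, to a normal completely positive map $\tilde{T}_1 : \tilde{\mathfrak{N}}_1 \to \mathbb{B}(\mathcal{L})$ with $\tilde{T}_1(I) = I$ (because $T(I) = I$, since $I \in \mathfrak{N}(\tilde{\mathcal{O}}_2)$). Applying Theorem \ref{Kraus1} to $\tilde{T}_1$ on $\mathcal{L}$ produces operators $\tilde{K}_j \in \mathbb{B}(\mathcal{L})$ with $\tilde{T}_1(\tilde A) = \sum_j \tilde K_j^* \tilde A \tilde K_j$ and $\sum_j \tilde K_j^* \tilde K_j = I$. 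Setting $K_j := I_{\mathcal{K}_2} \otimes \tilde K_j \in \hat{\mathfrak{N}}_2' \subset \mathfrak{N}(\mathcal{O}_2)'$, I then obtain $T(A_1) = \sum_j K_j^* A_1 K_j$ for every $A_1 \in \hat{\mathfrak{N}}_1$, with $\sum_j K_j^* K_j = I$.

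Finally, I would extend this identity from $\hat{\mathfrak{N}}_1$ to $\mathfrak{N}(\mathcal{O}_1) \vee \mathfrak{N}(\mathcal{O}_2)$. For $A_1 \in \mathfrak{N}(\mathcal{O}_1) \subset \hat{\mathfrak{N}}_1$ and $A_2 \in \mathfrak{N}(\mathcal{O}_2)$, Lemma \ref{Werner-lemma} gives $T(A_1 A_2) = T(A_1) A_2$, and since each $K_j$ commutes with $A_2$ this equals $\sum_j K_j^* A_1 K_j A_2 = \sum_j K_j^* A_1 A_2 K_j$. Linearity propagates the identity to the $*$-subalgebra algebraically generated by $\mathfrak{N}(\mathcal{O}_1) \cup \mathfrak{N}(\mathcal{O}_2)$, and normality of both sides extends it to the weak-operator closure $\mathfrak{N}(\mathcal{O}_1) \vee \mathfrak{N}(\mathcal{O}_2)$. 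The principal obstacle I expect is securing $K_j \in \mathfrak{N}(\mathcal{O}_2)'$; this is precisely what the funnel-interpolation trick resolves, by making Theorem \ref{Kraus1} operate inside $\hat{\mathfrak{N}}_2' \cong \mathbb{B}(\mathcal{L})$ so that the resulting Kraus operators commute with $\hat{\mathfrak{N}}_2 \supset \mathfrak{N}(\mathcal{O}_2)$ by construction.
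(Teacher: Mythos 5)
Your argument is correct, but it localizes the Kraus operators by a different mechanism than the paper does. The paper also interpolates type I factors $\mathfrak{N}_{1},\mathfrak{N}_{2}$ via the funnel property, but then applies Theorem \ref{Kraus1} once to the restriction of $T$ to the single type I factor $\mathfrak{N}_{1}\vee\mathfrak{N}_{2}$, obtaining the Kraus representation on all of $\mathfrak{N}(\mathcal{O}_{1})\vee\mathfrak{N}(\mathcal{O}_{2})$ in one stroke with $K_{j}\in\mathbb{B}(\mathcal{H})$; the localization $K_{j}\in\mathfrak{N}(\mathcal{O}_{2})'$ is then extracted a posteriori from $T(A_{2})=A_{2}$ and $T(I)=I$ via the identity $\sum_{j}[K_{j},A_{2}]^{*}[K_{j},A_{2}]=T(A_{2}^{*}A_{2})-T(A_{2}^{*})A_{2}-A_{2}^{*}T(A_{2})+A_{2}^{*}T(I)A_{2}=0$. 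You instead apply Theorem \ref{Kraus1} only to $T|_{\hat{\mathfrak{N}}_{1}}$ on the tensor factor $\mathcal{L}$ determined by $\hat{\mathfrak{N}}_{2}$, so that $K_{j}=I\otimes\tilde{K}_{j}$ lies in $\hat{\mathfrak{N}}_{2}'\subset\mathfrak{N}(\mathcal{O}_{2})'$ by construction (in fact a slightly stronger localization than the statement requires), and you must then extend the representation from $\hat{\mathfrak{N}}_{1}$ to $\mathfrak{N}(\mathcal{O}_{1})\vee\mathfrak{N}(\mathcal{O}_{2})$ via Lemma \ref{Werner-lemma} and a density argument. The trade-off is roughly even: the paper needs the fact (asserted without proof there) that two mutually commuting type I factors generate a type I factor, plus the commutator trick, while you need the extension step. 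The only point in your write-up to make precise is the final sentence: normal maps are $\sigma$-weakly continuous but not weak-operator continuous on unbounded sets, so the passage from the algebraically generated $*$-subalgebra to $\mathfrak{N}(\mathcal{O}_{1})\vee\mathfrak{N}(\mathcal{O}_{2})$ should go through the Kaplansky density theorem (restrict to the unit ball, where the $\sigma$-weak and weak operator topologies agree) before invoking continuity of $T$ and of $A\mapsto\sum_{j}K_{j}^{*}AK_{j}$. With that refinement, both routes are sound.
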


\begin{proof}
Let $\mathcal{O}_{1}$ and $\mathcal{O}_{2}$ be double cones such that $\bar{\mathcal{O}}_{1} \subset \tilde{\mathcal{O}}_1$ and $\bar{\mathcal{O}}_{2} \subset \tilde{\mathcal{O}}_2$. By Axiom \ref{funnel}, there are type I factors $\mathfrak{N}_{1}$ and $\mathfrak{N}_{2}$ such that $\mathfrak{N}(\mathcal{O}_{1}) \subset \mathfrak{N}_{1} \subset \mathfrak{N}(\tilde{\mathcal{O}}_1)$ and $\mathfrak{N}(\mathcal{O}_{2}) \subset \mathfrak{N}_{2} \subset \mathfrak{N}(\tilde{\mathcal{O}}_2)$. Then $\mathfrak{N}(\mathcal{O}_{1}) \vee \mathfrak{N}(\mathcal{O}_{2}) \subset \mathfrak{N}_{1} \vee \mathfrak{N}_{2} \subset \mathfrak{N}(\tilde{\mathcal{O}}_1) \vee \mathfrak{N}(\tilde{\mathcal{O}}_2)$, and $\mathfrak{N}_{1} \vee \mathfrak{N}_{2}$ is a type I factor. By Theorem \ref{Kraus1}, there exists a set $\{K_j | j \in J \}$ of operators in $\mathbb{B}(\mathcal{H})$ such that 
\[ T(A)=\sum_{j \in J} K_j^*AK_j \]
for any $A \in \mathfrak{N}_{1} \vee \mathfrak{N}_{2}$. $T(I)=I$ entails $\sum_{j \in J} K_{j}^{*}K_{j}=I$.

Since $T(A_2)=A_2$ for any $A_2 \in \mathfrak{N}(\mathcal{O}_2)$ and $T(I)=I$, $\sum_{j \in J}[K_j,A_2]^*[K_j,A_2]=0$ \cite[p.13]{clifton2001entanglement}. Thus $K_j \in \mathfrak{N}(\mathcal{O}_2)'$ for any $j \in J$.
\qed
\end{proof}





In Corollary \ref{Kraus2}, double cones $\mathcal{O}_1$ and $\mathcal{O}_2$ can approximate $\tilde{\mathcal{O}_1}$ and $\tilde{\mathcal{O}_2}$, respectively, as closely as possible. So we can say that $T$ can be approximately written with operators in $\mathfrak{N}(\mathcal{O}_2)'$.




\section{Conclusion}
Einstein \cite{einstein1948quanten} introduced the locality principle which states that physical effects in some finite space-time region do not influence its space-like separated finite region. In algebraic quantum field theory, R\'{e}dei \cite{redei2010einstein} captured the idea of the locality principle by the notion of operational W*-separability (Definition \ref{separability}), which had been introduced by R\'{e}dei and Valente \cite{redei2010local}. Valente \cite{valente2013local} called such an operation a relatively local operation to distinguish it from an absolutely local operation which can be written with Kraus operators (Definition \ref{two-local-operations}).

In the present paper, we examined two questions;

\begin{itemize}
\item Can we justify using a completely positive map as a local operation in algebraic quantum field theory?
\item Can we write a relatively local operation with some operators?
\end{itemize}

Roughly speaking, complete positiveness of an operation $T$ in a system $A$ is equivalent to the condition that $T$ performed in the system $A$ does not influence a space-like separated system $B$ which is represented by a set $M_n(\mathbb{C})$ of all $n \times n$ matrices with complex entries in the case of nonrelativistic quantum mechanics. But it is not obvious why a completely positive map is used as an operation in the case of algebraic quantum field theory because any local algebra which is associated with two space-like separated regions is not isomorphic to $\mathbb{B}(\mathcal{H}) \otimes M_{n}(\mathbb{C})$. In Theorem \ref{completely}, we showed that an operation is completely positive in algebraic quantum field theory if it is  performed in some region and does not influence its space-like separated region. Thus, it is reasonable to assume that a local operation is completely positive.

Valente \cite{valente2013local} distinguished between absolutely local operations and relatively local operations. 
A difference between these operations is that a relatively local operation is not necessarily written with Kraus operators while an absolutely local operation is written with Kraus operators by definition (Definition \ref{two-local-operations}). In the present paper, by generalizing slightly Kraus representation theorem (Theorem \ref{Kraus1}), it was shown that a relatively local operation can be approximately written with Kraus operators under the funnel property (Corollary \ref{Kraus2}). 

\section*{acknowledgement}
The author wishes to thank Masanao Ozawa for helpful comments on an earlier draft. The author is supported by the JSPS KAKENHI No.15K01123 and No.23701009.

\bibliographystyle{plain}
\bibliography{kitajima}

\end{document}